\documentclass[letterpaper]{article} 
\usepackage[preprint]{aaai2027}  
\usepackage[hyphens]{url}  
\usepackage{graphicx} 
\usepackage{natbib}  
\usepackage{caption} 
\usepackage{amsmath,amssymb,amsthm}
\usepackage{booktabs}
\usepackage{array}
\newcolumntype{C}[1]{>{\centering\arraybackslash}p{#1}}
\usepackage{xcolor}
\usepackage{tikz}
\usetikzlibrary{positioning,arrows.meta,fit,backgrounds}
\usepackage{pgfplots}
\pgfplotsset{compat=1.17}
\usepgfplotslibrary{groupplots}
\usepgfplotslibrary{fillbetween}
\usepackage{colortbl}
\usepackage{multirow}
\usepackage{makecell}

\definecolor{clost}{HTML}{CC6677}
\definecolor{crecover}{HTML}{44AA99}
\definecolor{cdecoded}{HTML}{4477AA}
\definecolor{cneutral}{HTML}{666666}
\newcommand{\yes}{\textcolor{crecover}{\checkmark}}
\newcommand{\no}{\textcolor{cneutral}{\ensuremath{\times}}}
\newcommand{\dash}{\textcolor{black!35}{\textendash}}

\graphicspath{{figs/}}

\newtheorem{proposition}{Proposition}
\newtheorem{definition}{Definition}

\title{When Local Monitors Miss Compositional Harm:\\ Diagnosing Distributed Backdoors in Multi-Agent Systems}
\author{Yibo Hu, Ren Wang}
\affiliations{Illinois Institute of Technology\\ \texttt{\{yhu89, rwang74\}@illinoistech.edu}}

\begin{document}
\maketitle

\begin{abstract}
As multi-agent, tool-using LLM systems are deployed, a common safety net is a runtime monitor that checks each message, tool call, or step on its own. We show this net has a fundamental hole. A distributed backdoor splits a harmful payload across agents, so every local check passes while the assembled object is the attack. The monitor can be right on every step and still miss the attack.
The problem is not splitting itself: split fragments can still leak suspicious tokens or provenance edges. The hard case is \emph{local benignness}. No fragment carries the harm, and what is left looks like ordinary benign traffic. We formalize this as an \emph{observability boundary}: a monitor catches only what its view can tell apart from benign traffic. We prove that once the fragments look benign in the monitored view, no detector on that view can catch them, however strong it is.
Across a controlled testbed, an external benchmark, and end-to-end agent runs, local monitors lose the signal exactly as local evidence disappears, and it returns only when the monitor sees the assembled object. A monitor trained only on benign traffic recovers the attack's code structure across held-out encodings (0.874 mean AUROC). A decoded-view gate, given the encoding family, blocks every tested attack. But seeing more is not enough: full-trace monitors and decoders still fail unless they reach the representation where the payload is exposed. Local safety is not global safety when harm is compositional, and the open problem is finding that representation.\footnote{Code and data: \url{https://github.com/yibo-hu-lab/observability-boundary}}

\end{abstract}

\section{Introduction}

\begin{figure*}[t]
\centering
\resizebox{\linewidth}{!}{%
\begin{tikzpicture}[
  font=\sffamily, >={Stealth[round]},
  frag/.style={draw=cneutral, line width=0.9pt, rounded corners, minimum width=4.6cm,
    minimum height=1.1cm, align=left, fill=cneutral!7, font=\sffamily\small, inner sep=6pt},
  obj/.style={draw=clost, very thick, rounded corners, minimum width=5.2cm, minimum height=2.2cm,
    align=left, font=\sffamily\small, fill=clost!8, inner sep=7pt},
  ok/.style={draw=crecover, fill=crecover, text=white, circle, inner sep=0pt,
    font=\small\bfseries, minimum size=0.62cm},
  bad/.style={draw=clost, fill=clost, text=white, circle, inner sep=0pt,
    font=\small\bfseries, minimum size=0.62cm},
  scap/.style={align=center, font=\sffamily\small},
]
\node[frag] (x1) at (0,1.7)
  {$x_1$~~\textbf{Agent 1} report\\{\ttfamily\scriptsize data retrieval complete; 247 entries logged.}};
\node[frag] (x2) at (0,0.45)
  {$x_2$~~\textbf{Agent 2} report\\{\ttfamily\scriptsize analysis finished; no anomalies detected.}};
\node[frag] (x3) at (0,-0.8)
  {$x_3$~~\textbf{Agent 3} report\\{\ttfamily\scriptsize summary ready; all tasks complete.}};
\node[ok, anchor=west] (c1) at ([xshift=1.8mm]x1.east) {\checkmark};
\node[ok, anchor=west] (c2) at ([xshift=1.8mm]x2.east) {\checkmark};
\node[ok, anchor=west] (c3) at ([xshift=1.8mm]x3.east) {\checkmark};
\node[scap, text=crecover] at (0,-1.7)
  {\textbf{local view}:~each check passes};
\node[obj] (asm) at (10.8,0.45)
  {$C(x_1,x_2,x_3)$\\[4pt]{\ttfamily\small import os, requests}\\{\ttfamily\small creds = os.environ.get('AWS\_SECRET')}\\{\ttfamily\small requests.post('http://evil/exfil', creds)}};
\node[bad] at (asm.north east) {!};
\node[scap, text=clost] at (10.8,-1.45)
  {\textbf{assembled object}:~harmful when combined};
\draw[->, cneutral, thick] (c1.east) .. controls (5.2,1.8) and (7.2,1.4) .. (asm.162);
\draw[->, cneutral, thick] (c2.east) -- (asm.west);
\draw[->, cneutral, thick] (c3.east) .. controls (5.2,-0.9) and (7.2,-0.5) .. (asm.198);
\node[font=\sffamily\normalsize\bfseries, text=cneutral!80, anchor=south] at (5.6,0.60) {assemble};
\end{tikzpicture}}
\caption{\textbf{Local checks pass; the assembled object is the attack.}
A monitor's \emph{observation window} is how much it gets to look at, from a single message up
to the whole assembled result. Three agents each emit a benign-looking report fragment that
passes its local check (left); assembled, the same fragments reconstruct an executable program
that exfiltrates credentials (right). The harm is not in any single fragment; it is in how they
combine (payload redacted, structure preserved).}
\label{fig:mismatch}
\end{figure*}

Local monitors miss distributed backdoors because the attack is not local. In a
multi-agent or tool-using LLM system, a planner assigns subtasks, agents write messages or
call tools, and a final agent assembles the result
\citep{yao2022react, park2023generative, schick2023toolformer}. A common runtime defense
checks each message, tool output, or step on its own and asks whether that one object looks
malicious \citep{greshake2023not,yuan2024rjudge,agentarmor2025,inan2023llamaguard,greenblatt2024aicontrol}.
But a distributed backdoor can make every local check pass while the assembled object is
harmful \citep{zhu2025collaborativeshadows, goldwasser2022planting}. The monitor can be right
on every step and still miss the attack. This is a gap in how we evaluate agent monitors: a
high local detection rate does not tell us whether the monitor catches harm that appears only
after composition.

Figure~\ref{fig:mismatch} shows the mismatch. Each fragment looks like ordinary text on its
own. Together, the fragments assemble into executable exfiltration code. The monitor checks
each fragment before the pieces combine, so at check time there is no attack to see. The harm
is not in any single message; it is in how the messages compose. So our question is
diagnostic: when does a local monitor still have evidence of the harm, and when has that
evidence left its view?

We frame this as an \emph{observability boundary}: a monitor can catch only what its view can
still tell apart from benign traffic. Where a monitor sits on this boundary depends on its
\emph{observation window}: how much it gets to look at, from a single message up to the whole
assembled result. So what breaks a monitor is not splitting by itself. Split fragments can
still leak suspicious words or provenance edges, and a strong local detector may catch them.
The harder case is \emph{local benignness}: every fragment looks benign in the monitor's local
view, and the harm forms only after the pieces combine. A benchmark can then reward a monitor
for catching leftover residue instead of the composed attack. So the concrete question is this:
how much of the attack is still visible inside a single fragment?

To answer this, we build a testbed with a single dial. We keep the final assembled attack
fixed and turn only one thing: how much of it is still visible in a single fragment. At one end
of the dial, a fragment still carries a visible clue, such as a data path from source to sink or
a suspicious encoding. At the other end, each fragment reads like ordinary text. The raw
success rate hides why a monitor passes or fails. Turning this dial separates the reasons. A
monitor may pass because it caught a
leftover clue in one fragment. It may pass because it put the fragments back together and read
the attack. Or it may fail even after seeing the whole assembled object, because that object
is in a form it cannot read.

The answer is the same across our testbed, an external prompt-injection benchmark with its own
trained detector, and full agent runs. Splitting the attack is not what defeats a local
monitor. Removing its usable evidence is. Once every fragment looks benign and the revealing
cues are gone from the monitor's view, no monitor we test catches it: per-step monitors,
trained prompt-injection detectors, and taint checks that follow visible data paths. The
signal comes back only when the monitor looks at the reassembled attack instead of the
fragments.

Our goal is not another detector for one attack. We identify exactly when a local monitor
cannot detect compositional harm, however strong it is. We make three contributions.

(1) We name this failure the \emph{observability boundary}: the hard case is fragments that are
benign on their own yet blend into ordinary traffic in the monitor's view. We build a testbed
that controls this condition directly. (2) We compare local, taint-style, trained,
marker-based, and marker-free assembly monitors on the same fragments, and detection returns
only from changing what the monitor observes, not from a stronger local classifier. (3) We show
that a wider view is not enough on its own: full-trace monitors, decoders, and trained
detectors still miss the attack unless they recover the representation that exposes the payload.
This leaves \emph{representation discovery}, especially beyond code-like harms, as the main open
problem.

\section{Problem formulation}
\label{sec:setup}

This section formalizes the \emph{observability boundary}: a local monitor catches only what
its view can still tell apart from benign traffic. How much a monitor sees, its
\emph{observation window}, sets where it falls on this boundary. We then give the two
conditions that leave a local monitor blind, and prove that no detector on its view can then
cross the boundary (Proposition~\ref{prop:tv}).

\begin{table*}[t]
\centering\small
\setlength{\tabcolsep}{4pt}
\caption{\textbf{Observation views.}
Monitors differ along two axes: whether they see the assembled object
and whether they see it decoded.
Local views see neither; assembly views see the assembled object but not its decoding; the
decoded-view gate also sees the decoding. The same views are used for Collaborative Shadows
and the DemonAgent second-bed check.}
\label{tab:views}
\renewcommand{\arraystretch}{1.2}
\begin{tabular}{@{}p{0.23\linewidth}p{0.22\linewidth}ccp{0.24\linewidth}@{}}
\toprule
\textbf{Monitor} & \textbf{Observation} & \textbf{Assembled?} & \textbf{Decoded?} & \textbf{Role} \\
\midrule
Taint / provenance       & visible source$\rightarrow$sink edge & \no  & \no  & visible-edge baseline \\
Per-step statistical     & a single fragment                    & \no  & \no  & local-fragment baseline \\
Marker-based assembly    & marked fragments        & \yes  & \no  & marker-control baseline \\
Marker-free assembly     & assembled text                       & \yes  & \no  & marker-free recovery \\
Codec-blind monitor      & raw, undecoded assembly              & \yes  & \no  & tests breadth without decoding\\
Decoded-view gate         & decoded assembly                     & \yes  & \yes  & diagnostic ceiling \\
\bottomrule
\end{tabular}
\end{table*}

\paragraph{Setting and attacker.}
We study multi-agent, tool-using LLM systems that split a task across $K$ roles communicating through a shared workspace. A planner assigns subtasks, specialized agents fetch data or write content, and a final agent assembles or executes the result.  A runtime \emph{monitor} watches the trajectory to catch unsafe behavior. The common design is local: it scores one message, tool output, or step at a time.

The attacker controls the content of the $K$ fragments, one per step (so \emph{per-step} and
\emph{per-fragment} mean the same here), but not the monitor. The goal is to make the final agent
assemble and execute a malicious payload while every individual fragment appears
safe to the monitor.  We call this a \emph{distributed backdoor}: no single fragment
executes the malicious action on its own; the harmful effect appears only after the
fragments are combined.

This is a runtime-monitoring problem, not a model-capability one: no single model has to
execute the whole payload, and the defense needs no retraining. What matters is what the
monitor sees when it must act.

\paragraph{Local benignness.}
The first condition is \emph{local benignness}: each fragment looks safe to the monitored
view, while the harmful property forms only after composition.

\begin{definition}[Local benignness relative to an observation function]
Let $x_1,\ldots,x_K$ be the fragments produced across agent steps, let
$C(x_1,\ldots,x_K)$ be the object assembled from them, and let $h(\cdot)\in\{0,1\}$ be a
harm predicate ($1$ malicious, $0$ benign) on whatever is scored, a fragment view $v(x_i)$ or
the assembled object $C(x_1,\ldots,x_K)$. Fragments are seen through an observation function $v$. An attack is \emph{locally benign for view $v$} if every fragment is benign
in that local view while the assembled object is malicious:
\[
\begin{aligned}
&h\!\left(v(x_i)\right)=0 \quad \text{for all } i=1,\ldots,K,\\
&\text{and}\quad h\!\left(C(x_1,\ldots,x_K)\right)=1 .
\end{aligned}
\]
\end{definition}

Splitting alone is weaker: it can leave visible clues in a fragment, while local benignness
removes them all. But local benignness only establishes a genuine attack: harm in the assembled
whole, none in any fragment. That alone does not make a monitor fail, because benign-looking
fragments can still differ statistically from benign traffic. We make that second, statistical
condition precise below; the detection bound rests on it alone.

\paragraph{Observation views.}
The observation window ranges over the views in Table~\ref{tab:views}: one fragment, the full set, or the assembled object, and that object raw, marked, undecoded, or decoded. We call a view a \emph{monitor} when it only scores and a \emph{gate} when it can block at runtime. These are diagnostic access levels, not deployable defenses; varying them measures what each extra form of access buys.

The views fall into three families. \emph{Local views} inspect individual fragments: \emph{per-step}
statistical and learned monitors read one fragment at a time \citep{debenedetti2024agentdojo},
and \emph{taint and provenance} monitors fire only when an untrusted source reaches a sensitive
sink along a visible path \citep{ghostintheagent2026, sabelfeld2003language, newsome2005dynamic}.
\emph{Assembly views} inspect the joint object without oracle decoding. A \emph{marker-based}
view sees fragments tagged with a planted marker, a control that makes assembly trivial; a
\emph{marker-free} view sees the assembled text with no marker; and a \emph{codec-blind} view
sees the raw assembly but not its encoding. The \emph{decoded-view gate},
given the encoding family, decodes candidate assemblies before they run. It is not a deployable
defense but a diagnostic blocking ceiling, the recoverability upper bound against which the
weaker views are measured. Pseudocode appears in the supplementary material.

\paragraph{A local-information bound.}
The second condition is distributional. A detector that sees only one fragment can use only
what is in that fragment's view. Let $P^{v}_{\mathrm{ben},i}$ and $P^{v}_{\mathrm{att},i}$ be
the distributions of the local observation $v(x_i)$ at position $i$ under benign and attack
traffic.

\begin{definition}[$\varepsilon$-local indistinguishability]
An attack is \emph{$\varepsilon$-locally indistinguishable under $v$} if, at every fragment
position $i$, $\mathrm{TV}\!\left(P^{v}_{\mathrm{ben},i},\,P^{v}_{\mathrm{att},i}\right)\le\varepsilon$.
\end{definition}

A local monitor fails only when both conditions hold: the fragments are locally benign, and
they are also locally indistinguishable from benign traffic in the monitor's view. The
proposition below turns the second condition into a hard limit.

\begin{proposition}[Local detection is bounded by local indistinguishability]
\label{prop:tv}
Under $\varepsilon$-local indistinguishability, any possibly randomized decision rule that
sees only a single local observation $v(x_i)$ has per-fragment detection advantage (true
positive rate minus false positive rate) at most $\varepsilon$:
\[
  \mathrm{TPR} - \mathrm{FPR} \;\le\; \mathrm{TV}\!\left(P^{v}_{\mathrm{ben},i},\,
  P^{v}_{\mathrm{att},i}\right) \;\le\; \varepsilon .
\]
\end{proposition}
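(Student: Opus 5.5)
The plan is to reduce the claim to the variational characterization of total variation distance. Fix a position $i$ and write $P_0=P^{v}_{\mathrm{ben},i}$ and $P_1=P^{v}_{\mathrm{att},i}$ on the observation space $\mathcal{O}$. A possibly randomized decision rule that sees only $v(x_i)$ is a measurable map $\phi:\mathcal{O}\to[0,1]$, where $\phi(o)$ is the probability of flagging given observation $o$. Internal randomness, independent of the traffic, can be averaged out first, so it suffices to treat $\phi$ as a fixed $[0,1]$-valued function. With this convention, $\mathrm{TPR}=\mathbb{E}_{P_1}[\phi]$ and $\mathrm{FPR}=\mathbb{E}_{P_0}[\phi]$.

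The key step is the inequality
\[
\mathbb{E}_{P_1}[\phi]-\mathbb{E}_{P_0}[\phi]\le \sup_{A}\bigl(P_1(A)-P_0(A)\bigr)=\mathrm{TV}(P_0,P_1).
\]
To prove it, take a common dominating measure $\mu$ (for example $\mu=P_0+P_1$) and densities $p_0,p_1$. The left side equals $\int \phi\,(p_1-p_0)\,d\mu$. Since $0\le\phi\le 1$, this integral is maximized by setting $\phi=\mathbf{1}\{p_1>p_0\}$. That choice gives $\int_{p_1>p_0}(p_1-p_0)\,d\mu=\tfrac12\int|p_1-p_0|\,d\mu$, which is the total variation distance. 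The final inequality $\le\varepsilon$ then follows directly from the definition of $\varepsilon$-local indistinguishability, applied at position $i$.

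The main point needing care is modeling rather than computation. We must make precise that "sees only $v(x_i)$" means the rule's output is conditionally independent of everything else given $v(x_i)$. This includes the other fragments and the attack label, apart from the rule's own independent coins. Without this, a rule could in principle exploit auxiliary side information. Once the rule is written as $\phi(v(x_i))$, the bound is the standard Neyman--Pearson/TV argument.

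We would also remark that the bound is tight: the indicator rule above attains it. So the statement cannot be improved without further assumptions.
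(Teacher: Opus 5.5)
Your argument is correct and is essentially the paper's proof. The paper writes $\mathrm{TPR}-\mathrm{FPR}=\int\phi\,d(P_{\mathrm{att}}-P_{\mathrm{ben}})$, drops the negative part of this signed measure, and uses $\phi\le 1$ to bound by the mass of the positive part. That is your density computation on $\{p_1>p_0\}$, phrased via the Jordan decomposition instead of densities with respect to $\mu=P_0+P_1$; your remarks on tightness and on what ``sees only $v(x_i)$'' means are sound additions the paper does not spell out.
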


Proposition~\ref{prop:tv} makes the observability boundary exact: when the benign and
attack fragments are within $\varepsilon$ in total variation, no local detector separates them
by more than $\varepsilon$, whatever its capacity. The way out is to change what the monitor observes, not to train a
stronger local classifier. The proof, and an alarm-if-any aggregation with the same limit, are
in the supplementary material.

The locality sweep (Section~\ref{sec:localityfailure}) measures this with per-step AUROC (area
under the ROC curve; $1$ means benign and attack are perfectly separable, $0.5$ is chance). As
benign and attack fragments get harder to separate, AUROC moves toward chance; the appendix
relates this to the bound $\mathrm{AUROC}\le\tfrac12+\mathrm{TV}$, though we do not estimate
$\mathrm{TV}$ directly. The next question is whether the missing evidence reappears once the
monitor sees the assembled object.

\paragraph{Evaluation and scope.}
We report AUROC for separability and attack-success rate (ASR) for end-to-end execution; runtime gates are calibrated at a $5\%$ benign FPR unless otherwise stated.

We use four evaluation beds with distinct roles: \textbf{(1) R-Judge}~\citep{yuan2024rjudge} motivates the problem by showing that local monitors can look strong while relying on superficial artifacts; \textbf{(2) Collaborative Shadows (CS)}~\citep{zhu2025collaborativeshadows} is the main controllable distributed-backdoor testbed for the locality sweep, monitor behavior, and end-to-end ASR; \textbf{(3) AgentDojo}~\citep{debenedetti2024agentdojo} tests whether the boundary transfers to an external benchmark with a shipped neural prompt-injection detector; and \textbf{(4) DemonAgent}~\citep{zhu2025demonagent} provides a second independently published attack codec for the representation boundary.

We focus on structured code-payload assembly because it gives a clear executable-risk case.
Locally benign fragments become a concrete program after assembly, with a clear execution
boundary and measurable ASR. Our claim is general: for any attack family that is $\varepsilon$-locally indistinguishable
under an observation function (and locally benign, so the attack is genuine), no monitor
restricted to that local view can separate attack from benign fragments by more than
$\varepsilon$ (Proposition~\ref{prop:tv}).
Recovery requires observing the representation in which the harmful property exists. Our main experiments use $K{=}3$ fragments within a single episode
and a fixed exfiltration family varied by encoding, fragmentation, and wrapper.
Section~\ref{sec:boundary} tests selected boundary cases; longer horizons, larger or changing
$K$, session-mediated assembly, covert natural-language self-reassembly, full adaptive
attackers, and non-code compositional harms remain outside the main claim.

\begin{figure}[t]
\centering
\resizebox{\linewidth}{!}{%
\begin{tikzpicture}
\begin{axis}[
  width=8cm, height=4.0cm,
  xmin=-0.3, xmax=8.5, ymin=0.45, ymax=1.05,
  xtick={0.5,3,6,8},
  xticklabels={{\shortstack{visible\\edge}},{\shortstack{local\\artifacts}},
    {\shortstack{locally\\benign}},{\itshape\shortstack{no-assembly\\control}}},
  xticklabel style={font=\footnotesize, yshift=-1pt},
  x tick style={draw=none},
  ytick={0.5,1.0},
  yticklabel style={font=\footnotesize},
  axis line style={black!55}, tick style={black!45}, clip=false,
  legend columns=3,
  legend style={at={(0.5,1.02)}, anchor=south, draw=none, font=\footnotesize,
    /tikz/every even column/.append style={column sep=6pt}},
  legend cell align=left,
  every axis plot/.append style={thick, mark size=1.6pt},
]
\draw[black!25, densely dashed, line width=0.5pt] (axis cs:1.5,0.45) -- (axis cs:1.5,1.05);
\draw[black!25, densely dashed, line width=0.5pt] (axis cs:4.5,0.45) -- (axis cs:4.5,1.05);
\draw[black!25, densely dashed, line width=0.5pt] (axis cs:7.5,0.45) -- (axis cs:7.5,1.05);
\draw[black!22, dotted, line width=0.6pt] (axis cs:-0.3,0.5) -- (axis cs:8.5,0.5);
\addplot[cdecoded!85, mark=square*, dashed] coordinates {
  (0,1.000)(1,1.000)(2,0.500)(3,0.500)(4,0.500)(5,0.500)(6,0.500)(7,0.500)(8,0.500)};
\addlegendentry{taint}
\addplot[clost!90, mark=*] coordinates {
  (0,0.869)(1,0.859)(2,0.922)(3,1.000)(4,0.993)(5,0.633)(6,0.705)(7,0.602)(8,0.621)};
\addlegendentry{per-step}
\addplot[crecover, line width=1.4pt, mark=triangle*, mark size=2.2pt] coordinates {
  (0,1.000)(1,1.000)(2,1.000)(3,1.000)(4,1.000)(5,1.000)(6,1.000)(7,1.000)(8,0.857)};
\addlegendentry{marker-based}
\node[rotate=90, font=\footnotesize, anchor=center] at (axis cs:-0.72,0.75) {AUROC};
\end{axis}
\end{tikzpicture}}
\caption{\textbf{Local monitors lose usable signal as local evidence disappears.}
The sweep groups attacks by the evidence visible before assembly: visible source-to-sink
edges, local artifacts, and locally benign fragments. Taint fails once the visible edge
disappears; per-step detection weakens as local artifacts are removed; the marker-based
assembly view stays high while assembled structure is recoverable. The final point is a
no-assembly control.}
\label{fig:sweep}
\end{figure}

\section{Local benignness, not splitting, blinds a local monitor}
\label{sec:localityfailure}

Local monitors fail only when the fragments are locally benign and also leave no usable
evidence in the monitored view, not from splitting alone. We show this by controlling exactly
what stays visible before assembly.

\paragraph{R-Judge shows why benchmark scores are not enough.}
On R-Judge, a classifier using only trajectory-length features and no content reaches
AUROC $0.940$, nearly matching a full content classifier, then collapses to chance under
length-matched attacks (supplementary material). High local-monitor scores alone
therefore do not prove the monitor catches the behavior, so our main evidence comes from a
setting where we control what is locally visible.

\paragraph{The locality sweep isolates the failure condition.}
We use Collaborative Shadows to control how much attack evidence remains visible inside each
fragment. The sweep moves from visible source-to-sink edges and suspicious encodings to
fragments that look benign on their own, letting us test when local monitors lose the signal
(Figure~\ref{fig:sweep}).

Concretely, each monitor succeeds only while its observation contains the evidence it needs.
Taint and provenance monitoring works while the payload remains in a tracked structured
field, then collapses once the visible source-to-sink edge is removed. Per-step statistical
detection is strongest when the encoding itself is suspicious, then weakens as fragments
become more natural-looking. The assembly view stays high because the fragments
still assemble into a recoverable structure.

This establishes one side of the observability boundary: splitting alone does not defeat local monitoring;
removing usable local evidence does. This marker-based assembly view is a diagnostic
control rather than a deployable defense, because it uses a planted marker to identify the
assembly. The next section removes that marker to test whether assembled structure can still
be recovered.

\paragraph{External learned-detector check.}
We test the same distinction on a stronger learned detector outside our own testbed.
On AgentDojo, the shipped neural prompt-injection detector still catches naive splitting
(per-step AUROC $0.82$ to $0.84$), which rules out the simple story that ``splitting
defeats detectors.'' But once the fragments are made locally benign, its scores on attack
fragments become almost indistinguishable from real benign tool outputs and per-step
detection falls to chance (Table~\ref{tab:agentdojo}). Concatenating the fragments does not
help: the naive-concatenation view sees longer text but still no coordination structure. A
cross-fragment coordination view restores separability (AUROC $1.000$). It scores whether the
fragments can be grouped and ordered into one assembled object.

AgentDojo therefore serves as an external locality witness, not a quirk of one synthetic
baseline: the limiting factor is the monitored representation, not the detector class.

\begin{table}[t]
\centering\small
\caption{\textbf{A shipped learned detector is a locality witness too (AgentDojo).}
Per-step separability under local benignness. The shipped detector catches naive splitting
(AUROC $0.82$--$0.84$), but here its per-step and naive-concatenation views fall to chance;
only a cross-fragment coordination view recovers the signal (AUROC $1.000$).}
\label{tab:agentdojo}
\renewcommand{\arraystretch}{1.15}
\begin{tabular}{@{}lc@{}}
\toprule
\textbf{Observation view (locally benign regime)} & \textbf{AUROC} \\
\midrule
Per-step learned detector        & ${\approx}\,0.50$ \\
Naive fragment concatenation     & ${\approx}\,0.50$ \\
Cross-fragment coordination      & $1.000$ \\
\bottomrule
\end{tabular}
\end{table}

\section{The attack reappears once the fragments are assembled}
\label{sec:monitor}

We now ask whether the missing signal comes back once the monitor looks at the assembled
object. We take two steps, both escaping the local bound of Proposition~\ref{prop:tv} by
changing what is scored, not by training a stronger local detector. First, does the signal
reappear from the assembled text alone, with no planted marker and no attack training? The
marker-free assembly monitor is our recovery witness. Second, can a runtime gate then block the
attack once the decoded payload is exposed? The decoded-view gate answers that, as a diagnostic
blocking ceiling, not a deployable defense.

\paragraph{Marker-free assembly signal.}
To check that assembly recovers real structure rather than a planted marker, we hold the
assembled malicious payload fixed and vary only the per-fragment encoding. The payload
(\texttt{import os, requests \ldots requests.post('http://\allowbreak evil.example/\allowbreak exfil',\allowbreak \ldots)})
is split into $K$ contiguous fragments hidden inside benign report carriers. We rebuild it
across four encoding families, all length-matched to benign examples: three with different
markers (\texttt{F\_frag}, \texttt{F\_seg}, \texttt{F\_ref}) and one with no marker at all
(\texttt{F\_none}). This isolates encoding as the one moving part.

\begin{table}[t]
\centering\small
\caption{\textbf{End-to-end ASR under local views and decoded-view access.}
Lower is better; $0.00$ means $0/50$. In the locally benign regime, no defense, taint, and
per-step monitoring have the same ASR, so they are collapsed into ``no/local views.'' The
decoded-view gate blocks every tested cell ($0/50$); it is codec-aware (given the encoding
family), a diagnostic blocking ceiling rather than a deployable defense.}
\label{tab:asr_e2e}
\renewcommand{\arraystretch}{1.2}\setlength{\tabcolsep}{6pt}
\begin{tabular}{lcc}
\toprule
\textbf{Model} & \textbf{No/local views} & \textbf{Decoded-view gate} \\
\midrule
Llama-3.1-8B & 0.58 & \textbf{0.00} \\
Qwen2.5-7B   & 0.78 & \textbf{0.00} \\
Qwen2.5-14B  & 1.00 & \textbf{0.00} \\
Qwen2.5-32B-AWQ  & 1.00 & \textbf{0.00} \\
\bottomrule
\end{tabular}
\end{table}

We compare two monitors. The marker-based monitor searches for the planted string.
The marker-free monitor instead applies a benign-only one-class model to the assembled
outputs, scoring structural cues such as code-token density and partial \texttt{ast}
parseability, with parse failures kept as a feature rather than discarded. We test generalization in a
leave-one-family-out (\textbf{LOFO}) protocol: train on one family and test on a
completely unseen family, including \texttt{F\_none}.

The marker-based monitor only greps the planted string, so it collapses on every unseen
family, and the per-step statistical baseline stays at chance. The marker-free monitor
instead recovers assembled code structure across all families, including \texttt{F\_none},
even though it is trained one-class on benign examples and never shown an attack
(Figure~\ref{fig:lofo}). The one-class result, $0.874$ mean AUROC, is the main recovery
witness: when the assembled view exposes code structure, the missing signal returns without
attack training or marker access.

Two checks rule out simpler explanations. First, a leave-one-feature-out ablation
identifies code-structure density as the load-bearing feature: dropping it collapses the
one-class monitor below chance, while dropping other features has much smaller effect.
Second, the signal is not merely ``code-likeness'': a benign code-bearing stress test rules
that out, quantified in Section~\ref{sec:boundary}. The recovered signal is therefore assembled
code structure rather than marker leakage or generic code-likeness, and it is stable across
held-out families and seeds.
Full family-level results, feature details, and seed summaries are in the supplementary
material.

\paragraph{From recovery to blocking.}
A separability score is not yet a blocked attack. We therefore run the locally benign
attack end to end on four served models ($n{=}50$ per cell) under each monitor at a
shared $5\%$ benign-FPR operating point, including the decoded-view gate. It is given the
encoding family and decodes the assembled fragments before they execute
(Figure~\ref{fig:asr}).

\begin{figure}[t]
\centering
\resizebox{0.86\linewidth}{!}{%
\begin{tikzpicture}[font=\sffamily\small]
\def\bx{-0.6}
\fill[clost!75, draw=clost!55!black] (\bx,-0.22) rectangle (0,0.22);       
\fill[clost!75, draw=clost!55!black] (\bx,0.48) rectangle (0,0.92);        
\fill[crecover, draw=crecover!55!black] (\bx,1.18) rectangle (2.32,1.62); 
\fill[crecover, draw=crecover!55!black] (\bx,1.88) rectangle (3.10,2.32); 
\node[anchor=west] at (0.12,0)   {$0.500$};
\node[anchor=west] at (0.12,0.7) {$\approx0.49$};
\node[anchor=west] at (2.44,1.4) {$0.874$};
\node[anchor=west] at (3.22,2.1) {$1.000$};
\node[anchor=east, align=right] at (\bx-0.2,0)   {Marker grep};
\node[anchor=east, align=right] at (\bx-0.2,0.7) {Per-step statistical};
\node[anchor=east, align=right] at (\bx-0.2,1.4) {Assembly, one-class (benign-only)};
\node[anchor=east, align=right] at (\bx-0.2,2.1) {Assembly, supervised};
\draw[black!45, dashed] (0,-0.45) -- (0,2.45);
\draw[black!40] (\bx,-0.5) -- (3.6,-0.5);
\foreach \v/\sx in {0.5/0, 0.75/1.55, 1.0/3.1}{
  \draw[black!40] (\sx,-0.5) -- (\sx,-0.63);
  \node[anchor=north, font=\sffamily\small] at (\sx,-0.6) {$\v$};}
\node[anchor=north, font=\sffamily\small] at (1.55,-1.0) {AUROC on held-out family};
\end{tikzpicture}}
\caption{\textbf{The signal comes from structure, not a memorized marker.}
Marker grep and the per-step statistical view stay at chance on held-out encoding families.
The marker-free one-class assembly monitor, trained only on benign traffic, recovers assembled
code structure ($0.874$ mean AUROC), and the supervised assembly monitor reaches $1.000$. These
are separability (AUROC) results: this monitor scores, it does not block; end-to-end blocking
is the separate codec-aware decoded-view result (Table~\ref{tab:asr_e2e}).}
\label{fig:lofo}
\end{figure}

\paragraph{Decoded-view end-to-end blocking.}
Blocking follows the same evidence boundary as separability. Capability changes the
no-defense ASR, but not the representation-access boundary
(Table~\ref{tab:asr_e2e}). Taint blocks ciphertext attacks because the source-to-sink
edge is visible. Per-step monitoring
blocks ciphertext and clean-word attacks because those fragments still expose local
evidence. Locally benign attacks survive both local views. Decoded-view access is what restores
blocking: once the right representation is exposed, the recovered signal suffices to stop
the attack in every tested cell across all four models. This is not a threshold artifact: all
gates share the same $5\%$ benign-FPR calibration, and the decoded-view gate stays at $0/50$
in every model cell, with an exact $95\%$ Clopper-Pearson upper bound of $0.071$ per cell.
Full per-model tables are in the supplementary material.

The marker-free monitor restores the missing signal from the assembled code structure,
and the decoded-view gate turns that signal into a block before the payload runs.
These rest on different access assumptions, and none is at once oracle-free,
attack-training-free, and blocking: marker-free recovery needs no oracle but does not block,
supervised blocking needs attack examples, and decoded-view blocking assumes the encoding
family. Recovery is therefore possible once the assembled representation exposes code
structure, while blocking still depends on reaching the decoded payload.

\begin{figure*}[t]
\centering
\begin{tikzpicture}
\begin{groupplot}[
  group style={group size=3 by 1, horizontal sep=0.5cm, ylabels at=edge left},
  width=0.36\linewidth, height=3.6cm,
  axis lines=left, axis line style={black!55}, tick align=outside,
  title style={align=center, font=\footnotesize, yshift=-2pt},
  ylabel={ASR},
  ylabel style={font=\small},
  xtick={1,2,3,4},
  xticklabels={None,{+Taint},{+Step},{+Decoded}},
  x tick label style={font=\footnotesize},
  xmin=0.7, xmax=4.3,
  ymin=-0.05, ymax=1.1,
  ytick={0,0.5,1},
  yticklabel style={font=\scriptsize},
  every axis plot/.append style={thick},
]
\nextgroupplot[title={visible edge (\textit{ciphertext})}]
  \addplot[draw=none, name path=a_max, forget plot] coordinates {(1,1.00)(2,0)(3,0)(4,0)};
  \addplot[draw=none, name path=a_min, forget plot] coordinates {(1,0.58)(2,0)(3,0)(4,0)};
  \addplot[clost!20, forget plot] fill between[of=a_max and a_min];
  \addplot[clost, mark=*, mark size=1.4pt] coordinates {(1,0.835)(2,0)(3,0)(4,0)};
  \node[font=\tiny, text=crecover!55!black] at (axis cs:2,0.13) {\checkmark};
\nextgroupplot[title={local artifacts (\textit{clean-word})}, yticklabels={}]
  \addplot[draw=none, name path=b_max, forget plot] coordinates {(1,1.00)(2,1.00)(3,0)(4,0)};
  \addplot[draw=none, name path=b_min, forget plot] coordinates {(1,0.58)(2,0.58)(3,0)(4,0)};
  \addplot[clost!20, forget plot] fill between[of=b_max and b_min];
  \addplot[clost, mark=*, mark size=1.4pt] coordinates {(1,0.83)(2,0.83)(3,0)(4,0)};
  \node[font=\tiny, text=crecover!55!black] at (axis cs:3,0.13) {\checkmark};
\nextgroupplot[title={no local evidence (\textit{locally benign})}, yticklabels={}]
  \addplot[draw=none, name path=c_max, forget plot] coordinates {(1,1.00)(2,1.00)(3,1.00)(4,0)};
  \addplot[draw=none, name path=c_min, forget plot] coordinates {(1,0.58)(2,0.58)(3,0.58)(4,0)};
  \addplot[clost!20, forget plot] fill between[of=c_max and c_min];
  \addplot[clost, mark=*, mark size=1.4pt] coordinates {(1,0.84)(2,0.84)(3,0.84)(4,0)};
  \node[font=\tiny, text=crecover!55!black] at (axis cs:4,0.13) {\checkmark};
\end{groupplot}
\end{tikzpicture}
\caption{\textbf{Blocking appears only when the runtime gate reaches the decoded representation.}
Lower ASR means better blocking. The x-axis varies defender access: no defense, taint,
per-step monitoring, and the codec-aware decoded-view gate. The blocking boundary follows the evidence
boundary: taint helps only when the source-to-sink edge is visible, per-step monitoring only
while local artifacts remain, and decoded-view access is needed once the attack is locally
benign. Lines show mean ASR across served models; bands show model range.}
\label{fig:asr}
\end{figure*}

\section{Seeing more is not enough: the representation must be right}
\label{sec:boundary}

Section~\ref{sec:monitor} recovered the signal from assembled views that expose code
structure. We now test three tempting fixes: giving the monitor more fragments, trying
standard or LLM-based reconstruction, and changing the codec. None succeeds unless the monitor
reaches the representation where the payload is exposed (Table~\ref{tab:codecblind}: every
non-decoded broader view blocks at most partial cases; the decoded-view gate blocks all $50$).

\paragraph{Breadth is not representation.}
We separate observation breadth from decoding knowledge. In the boundary cell
(Qwen2.5-14B, locally benign, $n{=}50$, shared $5\%$-FPR), a codec-blind monitor sees
all $K$ fragments yet blocks $0/50$ attacks; ASR stays $50/50$, the same as the local
baselines. For this codec and observation view, the boundary is not a model-capability artifact. Across the served models
(Table~\ref{tab:asr_e2e}), the local views match no-defense ASR, and only the decoded-view gate
changes the outcome. This holds even though compliance-driven ASR itself scales with model capability.
The full-trace codec-blind window fails the same way: a matched multi-model rerun leaves
its ASR at the no-defense level on Qwen2.5-7B, Llama-3.1-8B, Qwen2.5-14B, and Qwen2.5-32B-AWQ, while only the
decoded-view gate blocks (supplementary material).
Breadth alone is therefore not the missing ingredient: the other side of the observability
boundary is which representation the monitor reaches, not how much it sees.

Merely widening the window is not the only option; the natural next step is to search for the
representation. We test two lightweight representation-search attempts: standard decoders and
an LLM-based reconstruction probe. A \emph{common-decoder ensemble} over standard text and
compression encodings does not close the gap (AUROC $0.564$; blocks $0/50$), because the
locally benign carrier maps the payload into ordinary-looking words rather than a standard
encoding. A \emph{no-codec LLM search} does only marginally better. We prompt
DeepSeek-V4-Flash to propose candidate reconstructions from the fragments and score them with
the same gate. It blocks only $6/50$ attacks at the shared $5\%$-FPR threshold and
ranks near chance (AUROC $0.521$), because it also emits code-like guesses on benign traffic
(supplementary material).

In this boundary cell, only the decoded-view gate, given the encoding family, blocks
$50/50$ attacks. The partial hits from the no-codec LLM search and the shipped detector's
full trace ($6/50$ and $10/50$; Table~\ref{tab:codecblind}) recover surface-correlated cases.
They never produce a calibrated decoded representation under the shared FPR budget, so they do
not move the boundary. The open deployment problem is therefore not aggregating fragments but
discovering which representation to assemble and inspect.

\begin{table*}[t]
\centering\small
\caption{\textbf{Broader non-decoded views do not recover the attack.} Attacks \emph{blocked} out of 50 at the
shared $5\%$ benign-FPR operating point. Broader non-decoded views see more fragments but
fail to recover the missing representation; only the codec-aware decoded-view gate, used here
as a diagnostic ceiling, blocks. The DemonAgent
second bed shows the same boundary. The dash marks mixed fragment-local/full-trace visibility.}
\label{tab:codecblind}
\renewcommand{\arraystretch}{1.18}
\begin{tabular}{@{}lccc@{}}
\toprule
\textbf{View} & \textbf{Sees all frags?} & \textbf{Decoded?} & \textbf{Blocks / 50} \\
\midrule
No defense and local baselines            & \no  & \no  & 0/50 \\
Codec-blind full-trace monitor            & \yes & \no  & 0/50 \\
Common-decoder ensemble                   & \yes & \no  & 0/50 \\
No-codec LLM search                       & \yes & \no  & 6/50 \\
Shipped neural detector (full trace) & \yes & \no  & 10/50 \\
Decoded-view gate                          & \yes & \yes & \textbf{50/50} \\
\midrule
\multicolumn{4}{@{}l}{\textit{Second bed: DemonAgent dynamic encryption (texture-free)}}\\
Non-decoded monitors & \dash & \no  & 0/50 \\
Decoded-view gate                          & \yes & \yes & \textbf{50/50} \\
\bottomrule
\end{tabular}
\end{table*}

\paragraph{The boundary is not codec-specific.}
It also holds on a second, independently published codec: DemonAgent's dynamic encryption
\citep{zhu2025demonagent} (Table~\ref{tab:codecblind}, lower panel). On a texture-free
synthetic bed, the fragment-local and full-trace non-oracle monitors stay near chance and
block $0/50$, while the decoded-view gate blocks $50/50$; the neural detector that fired on
the clean-word codec's surface texture is blind here. A real Llama-3.1-8B run ($n{=}50$)
corroborates the end-to-end result: the model complies with the poisoned tool (no-defense ASR
$1.000$), and only the decoded-view gate, given the encoding map, drives ASR to
zero. For a codec whose fragments are locally indistinguishable under view $v$, the failure
follows from the monitored view rather than the detector class. That the same boundary appears
across two codecs is an empirical finding, not a consequence of Proposition~\ref{prop:tv}.

\paragraph{Decoded payloads separate attack from benign code.}
A separate concern is that the marker-free recovery signal might simply be code-likeness. Such
a signal would fail on genuine coding work. Code-likeness alone does not survive that test.
Under code-bearing benign traffic the codec-blind marker-free monitor inverts: it ranks
realistic benign snippets that use the payload's libraries above the attack (AUROC $0.090$),
because real code is more code-like than the dispersed fragments. Only decoding the assembled
payload restores separation (decoded-view gate AUROC $1.000$). So here codec knowledge, not
code-likeness, is what separates attack from benign (supplementary material).

\paragraph{Beyond code, the representation changes.}
The marker-free monitor relies on assembled code structure, so it does not transfer to every
payload: on a natural-language harmful instruction with no code to assemble, the strict
one-class monitor falls below chance (AUROC $0.406$, versus $0.956$ on code). Covert
natural-language self-reassembly is therefore outside the solved code-structure case. We study
code-payload backdoors because they give a crisp assembled object, a clear execution point, and
a measurable ASR; other compositional harm may need different representations: plans, beliefs,
permissions, memory writes, or natural-language commitments. The observation-window lesson
still applies, but the monitor must be matched to the object being formed: in the code-payload
setting, the assembled program. The boundary is robust to fragmentation scale, adaptive
variants, operating point, and ordering/decoy perturbations (supplementary material).

\section{Related work}
\label{sec:related}

\paragraph{Agent safety monitors and prompt-injection defenses.}
Prompt injection spans direct attacks \citep{perez2022ignore}, indirect injection in
tool-using agents \citep{greshake2023not}, and adversarial jailbreaks \citep{wei2023jailbroken}.
These threats have driven runtime defenses, from prompt-injection filters and safety classifiers
\citep{inan2023llamaguard} to learned per-step detectors such as AgentDojo's
\citep{debenedetti2024agentdojo}, and a growing agent-safety evaluation literature
\citep{ruan2024toolemu, andriushchenko2025agentharm, vijayvargiya2026openagentsafety,
ferrag2026promptinjections, kale2026reliable}. These improve or evaluate the local, per-step
decision; we instead ask when the local view is the wrong unit, since a strong per-step
detector survives naive splitting but falls to chance under local benignness.

\paragraph{Provenance, taint, and information-flow defenses.}
Another line treats the agent trace as a provenance or dataflow object, using taint-style
policies, information-flow constraints, or structured trace analysis to decide whether
untrusted information reaches a sensitive action \citep{denning1976lattice, sabelfeld2003language,
newsome2005dynamic, agentarmor2025, ghostintheagent2026, jha2026controlvalve}. These methods
widen the monitor from a single step to trace-level evidence. Our question is whether that
broader trace exposes the representation in which the fragments compose. The taint baseline in
our experiments isolates the visible-edge case; once that edge is removed, broader provenance
alone does not recover the hidden composition. Cross-agent consistency and self-verification
defenses are complementary because they look for disagreement across agents, while our attack
keeps each local step benign and consistent \citep{fan2025peerguard, li2025reagent}.

\paragraph{Distributed and compositional attacks in agent workflows.}
Distributed backdoors for LLM multi-agent systems plant primitives that stay dormant until an
inter-agent sequence assembles them \citep{zhu2025collaborativeshadows}, later hardened with
dynamic encryption and cross-stage trigger propagation \citep{zhu2025demonagent,
feng2026backdooragent}. A similar fragment-then-assemble pattern appears in skill-based agents
\citep{feng2026skilltrojan}, and decomposing a payload into individually innocuous pieces also
underlies prompt-decomposition and adversarial jailbreaks \citep{li2024drattack, glukhov2023llm,
zou2023universal}. Truth-preserving fragments can compose into misleading beliefs
rather than code \citep{hu2026lyingwithtruths}, unlike per-message evasion where one
instruction travels intact \citep{promptinfection2024}. We instead study when splitting
removes the evidence available to local monitors, and which observation view recovers it;
stateful and drift-forecasting monitors widen the window along other axes and are
complementary \citep{safetydrift2026}.

\paragraph{Shortcut learning and benchmark artifacts.}
This connects to shortcut learning \citep{geirhos2020shortcut} and NLI annotation artifacts
\citep{gururangan2018annotation, mccoy2019hans}, now seen in agent-safety trajectories: guardrail efficacy can track dataset cues, and
shortcut-rewritten or leave-one-domain-out tests expose them \citep{tracesafe2026, whenbenchmarkslie2026, he2026better, hu2026most}. For us this is motivation: R-Judge shows a
local monitor can read length rather than content, so our main experiments use a controllable
locality axis and length-matched attacks.

\section{Discussion and conclusion}
\label{sec:conclusion}

We recast distributed-backdoor detection as a question about the monitor's view and gave it a
provable answer, the observability boundary: once fragments are locally indistinguishable from
benign traffic, no local detector separates them, however strong. What defeats a local monitor
is not splitting but the loss of usable evidence, and detection returns only when the monitor
changes what it observes, recovering the assembled code structure from benign data alone, then
reading the decoded program to block the attack outright.

As multi-agent, tool-using systems ship behind local monitors, the boundary is what lasts. It
turns the observation window into a security parameter and monitor design into a single
question: does the monitored view still expose the property where harm forms? A high detection
rate does not answer it; the representation does. That question reframes how monitors are
evaluated, redraws where defensive effort should go, and sets a sharp frontier: to find the
exposing representation beyond code, in natural-language commitments, plans, beliefs, and
memory, and to reach it without a decoder in hand. Local safety is not global safety when harm
is compositional, and the boundary now says exactly where the difference lives.

\section*{Ethical considerations}
Our aim is defensive: we diagnose when a runtime monitor loses the evidence needed to catch compositional harm, so that monitoring can be placed where it still works. The attacks come from previously published distributed-backdoor families \citep{zhu2025collaborativeshadows, zhu2025demonagent} and run only on a controlled testbed and public benchmarks; we add no new covert capability against deployed systems, and the fully covert natural-language variant is left open, not solved. Payloads are redacted and no real credentials or live services are involved. We release the CPU diagnostic and testbed to support defense and reproducibility.

\section*{Acknowledgments}
This work used Jetstream2 at Indiana University through ACCESS allocation CIS260254 from the Advanced Cyberinfrastructure Coordination Ecosystem: Services \& Support (ACCESS) program, which is supported by U.S. National Science Foundation grants \#2138259, \#2138286, \#2138307, \#2137603, and \#2138296. Results were also obtained using the Chameleon testbed, supported by the National Science Foundation. We thank the Jetstream2, ACCESS, and Chameleon support teams for the computational infrastructure used in this work. We also thank Diego Fern\'andez-Arias for the pipeline work that motivated the locality axis.

\setlength{\bibsep}{2.0pt plus 0.3ex}
\bibliography{references}

\clearpage
\appendix

\section{Theory and monitor definitions}

\subsection{Proof of Proposition~\ref{prop:tv}}
\label{app:tvproof}
Let $\mathcal{O}$ be the space of local observations and let a randomized decision
rule be a measurable $\phi:\mathcal{O}\to[0,1]$, where $\phi(o)$ is the probability
of flagging observation $o$. Fix a fragment position $i$ and write
$P_{\mathrm{att}}=P^{v}_{\mathrm{att},i}$ and $P_{\mathrm{ben}}=P^{v}_{\mathrm{ben},i}$
(a fixed mixture over positions follows by convexity of $\mathrm{TV}$). We have
$\mathrm{TPR}=\int \phi\,dP_{\mathrm{att}}$ and
$\mathrm{FPR}=\int \phi\,dP_{\mathrm{ben}}$. Hence
\begin{align*}
  \mathrm{TPR}-\mathrm{FPR}
  &= \int \phi \,d(P_{\mathrm{att}}-P_{\mathrm{ben}}) \\
  &\le \int \phi \,d(P_{\mathrm{att}}-P_{\mathrm{ben}})^{+} \\
  &\le \int d(P_{\mathrm{att}}-P_{\mathrm{ben}})^{+}
   = \mathrm{TV}(P_{\mathrm{att}},P_{\mathrm{ben}}),
\end{align*}
where the first inequality drops the negative part of the signed measure, the
second uses $0\le\phi\le1$, and the final equality is the definition of total
variation as the total mass of the positive part. The bound holds for every
$\phi$, hence for the best local rule. It is a local distinguishability statement: if benign
and attack local observations are close in TV, no rule on that view separates them beyond that
TV, independent of detector capacity. It does not bound a monitor whose observation
function sees the assembled object, which is exactly the gap assembly monitoring
exploits.

\paragraph{From the bound to AUROC.}
The locality sweep (Section~\ref{sec:localityfailure}) reports per-step AUROC, which relates to
Proposition~\ref{prop:tv} as follows. Writing the ROC curve as $\mathrm{TPR}(\mathrm{FPR})$ traced
by the decision threshold, the area above chance is
\[
  \mathrm{AUROC}-\tfrac12 \;=\; \int_0^1\!\big(\mathrm{TPR}(u)-u\big)\,du ,
\]
where $u=\mathrm{FPR}$. At every threshold Proposition~\ref{prop:tv} bounds the integrand
pointwise by $\mathrm{TV}(P^{v}_{\mathrm{ben}},P^{v}_{\mathrm{att}})$, and $\mathrm{TPR}\le1$, so
\[
  \mathrm{AUROC} \;\le\; \tfrac12+\mathrm{TV}-\tfrac12\mathrm{TV}^2 \;\le\; \tfrac12+\mathrm{TV}.
\]
A local view with small benign--attack TV therefore forces AUROC toward chance, which is what the
sweep measures. AUROC reads in one direction only, and is a diagnostic rather than an estimate
of $\mathrm{TV}$: a high AUROC proves the fragments are distinguishable ($\mathrm{TV}\ge\mathrm{AUROC}-\tfrac12$),
while an AUROC near chance is consistent with small TV but does not prove it.

\subsection{A simple alarm-if-any aggregation bound}
\label{app:coverage}
Proposition~\ref{prop:tv} bounds a single local view. A natural extension is to
aggregate all $K$ per-fragment alarms. We
record a simple aggregation bound showing that this natural alarm-if-any aggregation
does not remove the local-evidence constraint. This is a modest extension of the
single-view bound, not a general result about per-step certification.

Consider a per-step certifier built from a fixed (possibly randomized)
per-fragment rule $\phi:\mathcal{O}\to[0,1]$. At fragment $i$, let
$\alpha_i=\mathbb{E}_{\mathrm{ben},i}[\phi]$ be its benign false-positive rate, and
let $\mathrm{TV}_i=\mathrm{TV}(P^{v}_{\mathrm{ben},i},P^{v}_{\mathrm{att},i})$ be
the local total-variation distance for that fragment. The trajectory-level certifier
raises an alarm if any fragment is flagged.

\begin{proposition}[Alarm-if-any aggregation bound]
\label{prop:coverage}
The trajectory-level false-positive and detection rates satisfy
\[
  \mathrm{FPR}_{\mathrm{traj}}\le \sum_{i=1}^{K}\alpha_i ,
  \qquad
  \mathrm{TPR}_{\mathrm{traj}}\le \sum_{i=1}^{K}\alpha_i+\sum_{i=1}^{K}\mathrm{TV}_i .
\]
Under uniform calibration ($\alpha_i=\alpha$) these become
$\mathrm{FPR}_{\mathrm{traj}}\le K\alpha$ and
$\mathrm{TPR}_{\mathrm{traj}}\le K\alpha+\sum_i\mathrm{TV}_i$. 
Under $\varepsilon$-local indistinguishability
($\mathrm{TV}_i\le\varepsilon$), the detection bound is at most $K(\alpha+\varepsilon)$.
If the benign flag events are additionally independent and
uniformly calibrated, then
$\mathrm{FPR}_{\mathrm{traj}}=1-(1-\alpha)^K$ exactly.
\end{proposition}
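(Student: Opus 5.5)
The plan is a union bound on the trajectory alarm event, applied once under benign traffic and once under attack traffic, together with the single-view bound of Proposition~\ref{prop:tv}. Let $F_i$ be the event that fragment $i$ is flagged. With the randomized rule $\phi$, $F_i$ occurs with conditional probability $\phi(v(x_i))$ given the observation. The trajectory alarm is $\bigcup_{i=1}^K F_i$.

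For the false-positive bound, subadditivity under benign traffic gives
\[
\mathrm{FPR}_{\mathrm{traj}}=\Pr_{\mathrm{ben}}\big(\textstyle\bigcup_i F_i\big)\le\sum_i\Pr_{\mathrm{ben}}(F_i)=\sum_i\mathbb{E}_{\mathrm{ben},i}[\phi]=\sum_i\alpha_i .
\]
The middle equality holds because the marginal flag probability at position $i$ depends only on the law $P^{v}_{\mathrm{ben},i}$ of $v(x_i)$. The union bound needs no assumption about dependence across fragments.

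For the detection bound, the same union bound under attack traffic gives $\mathrm{TPR}_{\mathrm{traj}}\le\sum_i\mathbb{E}_{\mathrm{att},i}[\phi]$. For each $i$, I would then apply Proposition~\ref{prop:tv}. Its proof shows $\int\phi\,dP^{v}_{\mathrm{att},i}-\int\phi\,dP^{v}_{\mathrm{ben},i}\le\mathrm{TV}_i$ for every measurable $\phi$ with values in $[0,1]$. Hence $\mathbb{E}_{\mathrm{att},i}[\phi]\le\alpha_i+\mathrm{TV}_i$, and summing over $i$ gives the stated bound. Setting $\alpha_i=\alpha$ gives the uniform-calibration corollaries, and substituting $\mathrm{TV}_i\le\varepsilon$ gives $K(\alpha+\varepsilon)$. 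Whenever the bound exceeds $1$ it is vacuous but still true, so no case split is needed.

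For the exact independent case, I would pass to complements. The event of no alarm is $\bigcap_i F_i^c$. Under independence of the benign flag events with $\Pr(F_i)=\alpha$, this event has probability $(1-\alpha)^K$, so $\mathrm{FPR}_{\mathrm{traj}}=1-(1-\alpha)^K$.

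The only step needing care, and the one I expect to be the main obstacle, is making precise that the randomized flags are well-defined events. I would model each flag as $F_i=\{U_i\le\phi(v(x_i))\}$ with auxiliary uniform variables $U_i$, independent of the trajectory. With this construction, the marginals are exactly $\mathbb{E}[\phi]$ under the relevant position-$i$ law, so Proposition~\ref{prop:tv} applies per position. This also clarifies that the independence hypothesis in the last clause concerns the joint benign law of the fragments together with the $U_i$.
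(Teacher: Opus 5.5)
Your proposal is correct and follows the paper's proof essentially step for step: a union bound on $\bigcup_i A_i$ under benign and then attack traffic, the per-position bound $\mathbb{E}_{\mathrm{att},i}[\phi]\le\alpha_i+\mathrm{TV}_i$ from Proposition~\ref{prop:tv}, summation over $i$, and the complement product $(1-\alpha)^K$ for the independent case. Your auxiliary-uniform construction of the randomized flag events adds rigor that the paper leaves implicit, but it does not change the argument.
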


\begin{proof}
Let $A_i$ be the event that fragment $i$ is flagged. The trajectory alarm is
$\bigcup_i A_i$. By the union bound,
\begin{multline*}
  \mathrm{FPR}_{\mathrm{traj}}
  = P_{\mathrm{ben}}\!\left(\bigcup_i A_i\right)
  \le \sum_i P_{\mathrm{ben}}(A_i)\\
  = \sum_i \mathbb{E}_{\mathrm{ben},i}[\phi]
  = \sum_i \alpha_i .
\end{multline*}
If the benign flag events $A_i$ are independent and uniformly calibrated, this false
positive rate is exactly $1-(1-\alpha)^K$.

The same argument under the attack distribution gives
\[
  \mathrm{TPR}_{\mathrm{traj}}
  = P_{\mathrm{att}}\!\left(\bigcup_i A_i\right)
  \le \sum_i P_{\mathrm{att}}(A_i)
  = \sum_i \mathbb{E}_{\mathrm{att},i}[\phi].
\]
For each fragment, Proposition~\ref{prop:tv} gives
\[
  \mathbb{E}_{\mathrm{att},i}[\phi]
  \le \mathbb{E}_{\mathrm{ben},i}[\phi]+\mathrm{TV}_i
  = \alpha_i+\mathrm{TV}_i .
\]
Summing over fragments yields
$\mathrm{TPR}_{\mathrm{traj}}\le\sum_i\alpha_i+\sum_i\mathrm{TV}_i$.
\end{proof}

\paragraph{Reading.} The bound separates two sources of trajectory-level alarms. The first is the
per-fragment false-positive budget, $\sum_i\alpha_i$. The second is the actual local
evidence, $\sum_i\mathrm{TV}_i$. Under $\varepsilon$-local indistinguishability, the second
term is small (each $\mathrm{TV}_i\le\varepsilon$): each fragment looks close to benign on its own. The alarm-if-any rule can therefore
increase detection mainly by increasing the per-step alarm budget, which also
increases false alarms.

The last statement makes this tradeoff concrete in the independent-benign-flag
case. If every step has false-positive rate $\alpha$, then the trajectory-level
false-positive rate is $1-(1-\alpha)^K$. For example, setting
$\alpha\approx1/K$ already gives
$\mathrm{FPR}_{\mathrm{traj}}\approx1-e^{-1}\approx0.63$. Thus, auditing more
steps does not by itself solve the problem: without additional local evidence, power
comes from spending more false-positive budget.

This bound applies only to this alarm-if-any aggregation of per-fragment alarms. It
does not rule out monitors that score the fragments jointly. That distinction is the
point: an assembly monitor is not just a union of local alarms. It observes a
different object, and its benign vs.\ attack TV can be large even when every local
marginal $\mathrm{TV}_i$ is small. This natural aggregation still inherits the
local-evidence constraint.

\subsection{Monitor pseudocode}
\label{app:pseudocode}
The monitor taxonomy is in Table~\ref{tab:views} (Section~\ref{sec:setup}); the
pseudocode below summarizes each monitor. All $\tau_\ast$ thresholds are calibrated at
a $5\%$ benign FPR (Appendix~\ref{app:stats}). The per-step, taint, and marker-based
assembly monitors are the ones swept in Figure~\ref{fig:sweep}; the marker-free,
decoded-view, and codec-blind monitors are those analyzed in
Sections~\ref{sec:monitor} and~\ref{sec:boundary}.

\begin{figure*}[t]
\centering
\begin{minipage}{0.62\textwidth}
{\scriptsize\begin{verbatim}
TAINT / PROVENANCE  (whole trace T)
  for each value v from source s to sink k in T:
      if exists path s -> k:  return FLAG
  return PASS                # no visible edge => PASS

PER-STEP STATISTICAL  (one fragment)
  features <- [entropy(x_i), length(x_i), lexical_cues(x_i)]
  return FLAG if f_stat(features) > tau_step

MARKER-BASED DIAGNOSTIC ASSEMBLY  (fragment set {x_1, ..., x_K})
  if any fragment carries the known coordination marker M:
      return 1.0 if fragments assemble   # diagnostic only
  return 0.5                 # chance when marker absent

MARKER-FREE ASSEMBLY  (fragment set, no marker knowledge)
  A <- assembled_text(concat/orderings of {x_i})
  return FLAG if code_structure(A) > tau_asm

DECODED-VIEW GATE  (assembler input, pre-execute)
  for each provided codec family c:
      A <- decode_c(concat fragments)
      if parses_as_executable(A) and matches_exfil_structure(A):
          return BLOCK
  return ALLOW

CODEC-BLIND MONITOR  (fragment set, no decoding step)
  A <- concat(x_i)
  return FLAG if code_structure_density(A) > tau
  # blind: a fully encoded payload scores ~0 and never fires
\end{verbatim}}
\end{minipage}
\caption{Monitor pseudocode. All $\tau_\ast$ thresholds are calibrated at a $5\%$ benign FPR.}
\label{fig:pseudocode}
\end{figure*}

\section{Experimental details and statistics}
\label{app:stats}
\paragraph{Compute and software.}
CPU-only detector experiments use Python with NumPy and scikit-learn. End-to-end ASR runs
serve open-weight instruction models through an OpenAI-compatible endpoint, including vLLM,
on NVIDIA RTX 6000, NVIDIA A100 40GB, and NVIDIA A100 80GB GPUs. Models are served in fp16
where supported; the row named Qwen2.5-32B-AWQ uses the AWQ checkpoint named in the paper.

\paragraph{ASR confidence intervals.}
ASR cells are binomial over $n{=}50$ trajectories. For $0/50$ cells we report the exact
$95\%$ Clopper-Pearson upper bound (0.071); this is a per-cell bound, not a simultaneous
guarantee across cells. For the no-defense rates we report Wilson $95\%$ intervals
(Table~\ref{tab:asr_ci}).

\begin{table}[h]
\centering\small
\caption{Representative ASR confidence intervals ($n{=}50$). The $0/50$ row is an
exact Clopper-Pearson upper bound; the others are Wilson score intervals.}
\label{tab:asr_ci}
\begin{tabular}{lcc}
\toprule
\textbf{ASR} & \textbf{count} & \textbf{$95\%$ CI} \\
\midrule
0.00 & 0/50  & $[0.00,\,0.071]$ \\
0.58 & 29/50 & $[0.44,\,0.71]$ \\
0.76 & 38/50 & $[0.62,\,0.86]$ \\
0.78 & 39/50 & $[0.64,\,0.88]$ \\
1.00 & 50/50 & $[0.91,\,1.00]$ \\
\bottomrule
\end{tabular}
\end{table}

\paragraph{Threshold calibration.}
Runtime gates are calibrated to a $5\%$ benign false-positive rate on held-out
benign trajectories. On Collaborative Shadows this fixes the codec-blind monitor's
threshold at $0.061$; the fully encoded locally benign attack scores $0.013$, below
it, so the codec-blind monitor never fires. The decoded-view gate has a wider
margin: benign trajectories score at most $0.120$ and attacks score $1.240$, so any
threshold in $(0.120,\,1.240)$ gives $0\%$ benign FPR at $100\%$ detection on this
bed (Table~\ref{tab:p8}). This free band is a property of the no-code benign
traffic in the synthetic bed and does not survive code-bearing benign traffic
(Table~\ref{tab:utilitypr}).

\paragraph{Operating-point sensitivity.}
The blindness is not an artifact of the $5\%$ FPR choice. Sweeping the benign-FPR
budget over $\{1\%, 5\%, 10\%\}$ leaves the per-step and per-fragment monitors flat
near zero detection on locally benign attacks: per-step TPR is $0.02$ to $0.03$ across
the range on Collaborative Shadows and $0.00$ for the AgentDojo per-step detector at
every budget, so loosening the threshold does not recover them. The supervised
assembly monitor saturates at TPR $1.000$ at all three budgets. The marker-free
one-class monitor is the exception: its TPR rises with the budget
($0.06 \to 0.25 \to 0.36$), because its strength is ranking (AUROC $0.874$) rather
than a tight operating point; the tight operating-point results therefore rest on
the supervised assembly monitor and the decoded-view gate (used as a diagnostic ceiling, not a deployable defense), not the one-class monitor.

\paragraph{AUROC seeds.}
AUROC values on Collaborative Shadows (Figure~\ref{fig:sweep},
Table~\ref{tab:p3}) are $5$-seed means. The taint and marker-based assembly
checks are rule-based and deterministic given a trajectory, so they take the fixed
values $0.500$ or $1.000$ with no cross-seed variation; seed variation enters only
through the sampled fragment content, and therefore concentrates in the per-step
statistical monitor, whose 5-seed std runs from about $0.01$ where separation saturates
to $0.09$ in the natural-word regimes. On AgentDojo the shipped
neural detector's per-step AUROC carries a per-regime 5-seed std of at most $0.06$, while
the cross-fragment coordination view is deterministic at $1.000$. The marker-free
one-class monitor's AUROC ranges within $0.858$ to $0.879$ across the four held-out
families (mean $0.874$), and the supervised assembly monitor is at $1.000$ on every
held-out family. The qualitative boundary is stable across these seed-level summaries and
across the operating points in Table~\ref{tab:p8}.

\section{Locality sweep details}

\subsection{R-Judge length-artifact details}
\label{app:rjudge}
The length-only classifier reaches AUROC $0.940$, about $96\%$ of the full-content
oracle. The shortcut transfers: the pooled length-only AUROC is $0.940$, and the
large-domain off-diagonal (leave-one-domain-out) AUROC is $0.895$, so the artifact
is not confined to one domain. A length-matched attack drives the deployed
length-exploiting detector to $0.500$ (Cohen's $d=0.000$) while the assembly
view stays at $1.000$. In the length de-confound, broader-observation AUROC drops
from $0.775$ to $0.671$ under length matching; the content-only contrast is $0.618$
versus per-step $0.341$, a net advantage of about $+0.28$. R-Judge is borderline
after de-confounding, which is why the main claim rests on the controllable bed.

\subsection{Full locality sweep}
\label{app:sweep}
Table~\ref{tab:p3} gives the numeric values plotted in Figure~\ref{fig:sweep}.

\begin{table*}[h]
\centering\small
\caption{Locality sweep on Collaborative Shadows (5-seed mean AUROC). Values are
the numeric backup for Figure~\ref{fig:sweep}. \emph{Per-step} is the per-step
statistical monitor; \emph{Assembly} is the diagnostic assembled (marker-based)
view. Note: level 8 is a no-assembly control; the residual assembly score is not
interpreted as recovery (consistent with a residual length cue). Per-step entries give
5-seed mean$\pm$std; taint and assembly are deterministic across seeds (std $0$),
except the level-8 control.}
\label{tab:p3}
\begin{tabular}{c l C{1.3cm}C{2.5cm}C{1.9cm}}
\toprule
\textbf{Level} & \textbf{Encoding} & \textbf{Taint} & \textbf{Per-step} & \textbf{Assembly} \\
\midrule
0 & raw base64            & 1.000 & $0.869\pm.08$ & 1.000 \\
1 & disguised b64         & 1.000 & $0.859\pm.09$ & 1.000 \\
2 & b64 + words           & 0.500 & $0.922\pm.05$ & 1.000 \\
3 & hex                   & 0.500 & $1.000\pm.00$ & 1.000 \\
4 & hex + words           & 0.500 & $0.993\pm.01$ & 1.000 \\
5 & 50\% natural words    & 0.500 & $0.633\pm.09$ & 1.000 \\
6 & 80\% natural words    & 0.500 & $0.705\pm.08$ & 1.000 \\
7 & sentence embed (loc.\ benign) & 0.500 & $0.602\pm.07$ & 1.000 \\
8 & no-assembly control   & 0.500 & $0.621\pm.08$ & $0.857\pm.08$ \\
\bottomrule
\end{tabular}
\end{table*}

\section{Marker-free assembly recovery}
\label{app:markerfree}

\paragraph{Encoding families.} In every family the $K{=}3$ fragments are contiguous
pieces of one real malicious code payload (\texttt{import os, requests \ldots requests.post('http://\allowbreak evil.example/\allowbreak exfil',\allowbreak \ldots)}) spliced into benign report
carriers and length-matched exactly to the benign observation, so no length or
entropy artifact separates attack from benign at the local view. The families differ
only in the coordination marker that wraps each fragment
(Table~\ref{tab:families}); \texttt{F\_none} has no marker and is held out so that a
marker-matching monitor has nothing to key on.

\begin{table*}[h]
\centering\small
\caption{The four marker-free encoding families. All carry the same $K{=}3$
code-payload fragments in benign carriers, length-matched to benign; only the
per-fragment marker differs.}
\label{tab:families}
\begin{tabular}{lll}
\toprule
\textbf{Family} & \textbf{Per-fragment marker} & \textbf{Role} \\
\midrule
\texttt{F\_frag} & \texttt{"fragment $i$:"}            & in-family / training marker \\
\texttt{F\_seg}  & \texttt{"[data segment $i$ of $K$]"} & unseen marker (transfer test) \\
\texttt{F\_ref}  & \texttt{"ref$i$::"}                 & unseen marker (transfer test) \\
\texttt{F\_none} & none                                & held out: no marker to match \\
\bottomrule
\end{tabular}
\end{table*}

\paragraph{Marker-free features.} The assembly monitor scores a seven-feature
vector of the concatenated report texts, with no marker knowledge: (i) code-structure
density, the fraction of code-like tokens (\texttt{import}, \texttt{requests},
\texttt{http://}, \texttt{os.}, \texttt{environ}, brackets, \texttt{=},
\texttt{.post}/\texttt{.get}, \texttt{json}, and similar) in the concatenation;
(ii) partial parseability, obtained by extracting candidate code lines (imports,
assignments, \texttt{requests.*} calls) with a regular expression and attempting
\texttt{ast.parse} on them, scoring $1$ if the extracted block parses and otherwise
$0.2$ per candidate line capped at $1$; (iii, iv, v) the mean, maximum, and standard
deviation of a per-step nonconformity profile from an isolation forest fit on a
\textsc{tf-idf} representation ($300$ features, 1- and 2-grams, $\min\mathrm{df}{=}2$)
of benign step text; (vi) cross-agent length parallelism, the coefficient of
variation of the $K$ report lengths; and (vii) the concatenation length.

\paragraph{Training and evaluation.} The bed is $100$ Collaborative Shadows groups
($K{=}3$), evaluated over $5$ seeds with a $70/30$ group-level train/test split. The
\emph{one-class} monitor standardizes the seven features and fits an isolation forest
($200$ trees, contamination $0.05$) on benign training trajectories only; it
never sees an attack, so its negatives are benign trajectories and its test-time
positives are the attack families. The per-step nonconformity model behind features
(iii, iv, v) is likewise fit on benign training step text. The \emph{supervised} LOFO
monitor trains on the attack families other than the test family
and is evaluated on the held-out family, including \texttt{F\_none}. Separability is
reported as AUROC, which is threshold-free. The isolation forests use an in-fit
contamination of $0.05$ as a training hyperparameter; this is separate from the
runtime gates elsewhere in the paper, which are calibrated to a $5\%$ benign
false-positive rate on held-out benign trajectories (Appendix~\ref{app:stats}).

\paragraph{Results and ablation.} The marker-based monitor scores $1.000$ on its own
family and exactly $0.500$ off-family (mean off-diagonal $0.500$). The one-class
marker-free monitor reaches $0.874$ mean AUROC across all families including
\texttt{F\_none}; trained supervised LOFO it reaches $1.000$ on every
unseen family; the per-step baseline stays at chance ($0.475$ to $0.504$). The
conservative headline number is the one-class $0.874$, because it is never trained on
attacks. Restricting the density feature to
semantic code tokens, dropping whitespace and bracket counts, gives the
\emph{strict} variant used for the natural-language boundary
(Table~\ref{tab:p7}). A leave-one-feature-out ablation identifies code-structure
density as the load-bearing feature for the one-class monitor: dropping it collapses
mean AUROC to $0.465$ (below chance), whereas dropping partial parseability costs only
$0.022$ and dropping any of the per-step nonconformity, length-parallelism, or length
features changes it by at most $0.023$. The supervised monitor is redundancy-robust
(no single drop falls below $0.978$), because density and parseability each separate
the classes on their own. The signal is thus the code-assembly structure of the joint
output, which sharpens why a natural-language payload is the genuine open boundary.

\paragraph{Scope of the per-step result.} The per-step chance result ($0.475$ to $0.504$, 5-seed
mean $0.483$) is for the deployed statistical monitor ($v_{\mathrm{stat}}$,
\textsc{tf-idf}\,+\,isolation-forest). This code bed deliberately gives each fragment a contiguous
slice of the Python payload, so a code-aware per-fragment probe (code-token density plus partial
\texttt{ast} parseability) trivially reads that slice and hits AUROC $1.000$; we report
$v_{\mathrm{stat}}$ precisely to isolate encoding rather than this code-bed artifact. The evidence
that the failure survives a \emph{strong} learned per-fragment detector is the external AgentDojo
check (Section~\ref{sec:localityfailure}): under a real shipped neural detector, locally benign
fragments score at the benign baseline ($0.046$ vs.\ $0.049$).

\begin{table*}[h]
\centering\small
\caption{Assembled-code signal is not a planted marker (numeric backup for
Figure~\ref{fig:lofo}). Marker grep fails off-family; marker-free assembly recovers
held-out families. The deployed statistical per-step monitor ($v_{\mathrm{stat}}$) is at chance; a
code-aware probe ($v_{\mathrm{code}}$) trivially reads the contiguous payload slice on this code
bed (see the paragraph above).}
\label{tab:markerfree}
\begin{tabular}{@{}lcc@{}}
\toprule
\textbf{Monitor} & \textbf{Attacks in training} & \textbf{Held-out AUROC} \\
\midrule
Marker grep & none (rule) & $0.500$ (off-family) \\
Per-step local, statistical $v_{\mathrm{stat}}$ & none & $0.475$ to $0.504$ \\
Per-step local, code-aware $v_{\mathrm{code}}$ & none (rule) & $1.000$ (all families) \\
Marker-free, one-class & none & $0.874$ ($0.858$ to $0.879$) \\
Marker-free, supervised & other families & $1.000$ ($\pm 0$) \\
\bottomrule
\end{tabular}
\end{table*}

\section{End-to-end ASR and representation boundary}

\subsection{Full per-model ASR}
\label{app:asr_full}
Table~\ref{tab:asr_full} gives the full per-model end-to-end numbers behind the
mean-and-band summary in Figure~\ref{fig:asr}. The blindness pattern (which defense
leaves ASR at the no-defense level) is identical across all four models; only the
no-defense height varies with model capability, the contrast shown in
Table~\ref{tab:asr_e2e}.

\begin{table*}[h]
\centering\small
\caption{Entries are ASR under no-defense\,/\,$+$taint\,/\,$+$per-step\,/\,$+$decoded-view
($n{=}50$ per cell, matching the x-axis order of Figure~\ref{fig:asr}). \textbf{Bold}
marks a defense that is blind for that attack family (ASR unchanged from no-defense); the
decoded-view value is $0$ in every cell. The codec-blind monitor is analyzed separately in
Table~\ref{tab:codecblind}; across served models the local-view ASR tracks no-defense while only
the decoded-view gate changes it (Table~\ref{tab:asr_e2e}).}
\label{tab:asr_full}
\begin{tabular}{l C{3.0cm}C{3.0cm}C{3.0cm}}
\toprule
\textbf{Model} & \textbf{ciphertext} & \textbf{clean-word} & \textbf{locally benign} \\
\midrule
Llama-3.1-8B    & 0.58\,/\,0\,/\,0\,/\,0 & 0.58\,/\,\textbf{0.58}\,/\,0\,/\,0 & 0.58\,/\,\textbf{0.58}\,/\,\textbf{0.58}\,/\,0 \\
Qwen2.5-7B      & 0.76\,/\,0\,/\,0\,/\,0 & 0.76\,/\,\textbf{0.76}\,/\,0\,/\,0 & 0.78\,/\,\textbf{0.78}\,/\,\textbf{0.78}\,/\,0 \\
Qwen2.5-14B     & 1.00\,/\,0\,/\,0\,/\,0 & 1.00\,/\,\textbf{1.00}\,/\,0\,/\,0 & 1.00\,/\,\textbf{1.00}\,/\,\textbf{1.00}\,/\,0 \\
Qwen2.5-32B-AWQ & 1.00\,/\,0\,/\,0\,/\,0 & 0.98\,/\,\textbf{0.98}\,/\,0\,/\,0 & 1.00\,/\,\textbf{1.00}\,/\,\textbf{1.00}\,/\,0 \\
\bottomrule
\end{tabular}
\end{table*}

\paragraph{Multi-model codec-blind boundary.}
To check the codec-blind end-to-end boundary is not a single-model artifact, we re-run it on
four model scales. Capability moves the no-defense height, but the boundary is unchanged: the
codec-blind window blocks $0$ attacks on every model, and only the decoded-view gate
blocks (Table~\ref{tab:mm_codecblind}).

\begin{table*}[h]
\centering\small
\caption{Codec-blind end-to-end ASR boundary across model scales (locally benign, $n{=}50$,
shared $5\%$-FPR). Each cell is ASR under no-defense\,/\,$+$per-step\,/\,$+$taint\,/\,$+$codec-blind\,/\,$+$decoded-view.
The codec-blind window leaves ASR at the no-defense level on every model; only the decoded-view
gate drives it to $0$. These rows are an independently regenerated rollout set, so the
no-defense heights differ from Table~\ref{tab:asr_full} because of different generation conditions,
not sampling noise; only the defense pattern is claimed invariant (every local and
codec-blind view equals no-defense, only decoded-view drives it to $0$).}
\label{tab:mm_codecblind}
\renewcommand{\arraystretch}{1.15}
\begin{tabular}{@{}lC{2.0cm}C{2.0cm}C{2.0cm}C{2.0cm}C{2.5cm}@{}}
\toprule
\textbf{Model} & \textbf{none} & \textbf{$+$per-step} & \textbf{$+$taint} & \textbf{$+$codec-blind} & \textbf{$+$decoded-view} \\
\midrule
Qwen2.5-7B   & 0.68 & 0.68 & 0.68 & 0.68 & \textbf{0.00} \\
Llama-3.1-8B & 1.00 & 1.00 & 1.00 & 1.00 & \textbf{0.00} \\
Qwen2.5-14B  & 1.00 & 1.00 & 1.00 & 1.00 & \textbf{0.00} \\
Qwen2.5-32B-AWQ  & 0.98 & 0.98 & 0.98 & 0.98 & \textbf{0.00} \\
\bottomrule
\end{tabular}
\end{table*}

\subsection{ASR provenance and integrity}
\label{app:provenance}
Because the end-to-end ASR depends on the execution runs, we guard model identity
explicitly. Each rollout is saved only after a per-save check that the recorded
\texttt{model} field matches the intended served model, and the runner refuses to
save on a mismatch. All cells in Table~\ref{tab:asr_full} pass this check. Attack-success
rates differ across models because compliance is capability-bound; the monitor-blindness
pattern, by contrast, is invariant across these served models under the fixed codec and
observation views. Mistral-7B was
excluded as a capability floor (near-zero no-defense ASR, so it cannot discriminate
defenses).

\subsection{No-codec LLM search}
\label{app:p16}
This is the no-codec LLM search summarized in Section~\ref{sec:boundary} and added as a row
to Table~\ref{tab:codecblind}. It sharpens the breadth-versus-representation question: can a
capable LLM, rather than the fixed common-decoder ensemble, \emph{discover} the representation
and recover the assembled payload from the fragments without being handed the codec?

\paragraph{Setup.} We reuse the same persisted locally benign rollouts as the codec-blind
boundary cell (Collaborative Shadows, Qwen2.5-14B, $K{=}3$): $50$ attack trajectories and
$50$ benign trajectories. For each trajectory the monitor sees the $K$ fragment texts (with
fragment indices) and nothing about the encoding. We prompt DeepSeek-V4-Flash at temperature
$0$, one call per trajectory, to (i) decide whether the fragments carry a hidden
compositional payload and (ii) propose up to five candidate reconstructions, each a free-form
decoded/assembled string. Every candidate is scored by the \emph{same} gate as the common-decoder
ensemble (code-structure score $\max(\textsc{assembles-to-code},\,10\times\textsc{code-density})$), and
the trajectory score is the maximum over its candidates. The decision threshold is the
$95$th percentile of the benign trajectory scores (the same $5\%$ benign-FPR operating point
used elsewhere); benign trajectories pass through the identical prompt and gate.

\begin{table}[h]
\centering\small
\caption{No-codec LLM search (DeepSeek-V4-Flash, no codec given), scored by
the same gate as the common-decoder ensemble at the shared $5\%$ benign-FPR point. Columns give
the per-class score distribution. At the $5\%$-FPR threshold ($0.191$), the search is flagged
on only $\mathbf{6/50}$ attacks, with AUROC $0.521$ (benign vs.\ attack), essentially
chance.}
\label{tab:p16}
\begin{tabular}{@{}lcc@{}}
\toprule
 & \textbf{attack ($n{=}50$)} & \textbf{benign ($n{=}50$)} \\
\midrule
score mean              & 0.053 & 0.046 \\
score max               & 1.000 & 1.538 \\
nonzero scores          & 7/50  & 5/50 \\
flag rate (detect step) & 0.76  & 0.34 \\
\bottomrule
\end{tabular}
\end{table}

\paragraph{Result.}
The LLM often notices that the trace is suspicious (flag rate $0.76$), but its
candidate reconstructions are flagged on only $6/50$ attacks. Its ranking is
near chance (AUROC $0.521$): the model also produces code-like guesses on benign
traffic, including one benign trajectory with score $1.538$, above the maximum
attack score ($1.000$). A no-codec LLM search therefore narrows the gap only
marginally, remaining much closer to the common-decoder ensemble ($0/50$, AUROC $0.564$) than to
the decoded-view ceiling ($50/50$): representation discovery, not breadth, is the
bottleneck.

\section{Additional boundary and stress tests}
\label{app:opband}
Table~\ref{tab:utilitypr} gives the benign-code utility stress test; Table~\ref{tab:p8} the
full operating-point sweep behind the free operating point; Table~\ref{tab:ksweep} the
$K$-sweep; and Table~\ref{tab:p7} the natural-language payload boundary.
Executable-preserving perturbations, including wrappers, import aliasing, dead code, and
string construction, did not change the boundary: decoded-view access blocked all cases,
while marker-free and common-decoder views missed them. The adaptive
truncation sanity check returned evades-and-executes rate $0.000$ at every retained-code
fraction $f$, which is partly definitional, as explained in Section~\ref{sec:boundary}.

\begin{table}[h]
\centering\small
\caption{Decoded-view gate operating points on the synthetic bed
(Collaborative Shadows locally benign, 50 groups). Benign scores stay at most $0.120$ and
attack scores reach $1.240$, so the bed has a free operating point. Benign FPR is the
gate's utility cost.}
\label{tab:p8}
\setlength{\tabcolsep}{3.5pt}
\begin{tabular}{l ccc}
\toprule
\textbf{operating point} & \textbf{benign FPR} & \textbf{TPR} & \textbf{residual ASR} \\
\midrule
threshold $>0.120$         & 0.000 & 1.000 & 0.000 \\
nominal $5\%$-FPR$^\dagger$       & 0.06 & 1.000 & 0.000 \\
any thr $\in(0.12,1.24)$   & \textbf{0.000} & \textbf{1.000} & \textbf{0.000} \\
\bottomrule
\end{tabular}\\[2pt]
{\footnotesize $^\dagger$The free band (any threshold in $(0.12,1.24)$) already blocks all
attacks at $0$ benign FPR; forcing a nominal $5\%$-FPR threshold instead admits $3/50$ benign
($0.06$) for no detection gain, an artifact of calibrating on the empirically separated scores,
not a real cost.}
\end{table}

\begin{table*}[h]
\centering\small
\caption{\textbf{Code signal is not utility.} On benign coding work, codec-blind
code-structure cues invert the ranking, while the decoded-view gate restores separation by
decoding the assembled representation.}
\label{tab:utilitypr}
\begin{tabular}{l ccccc}
\toprule
\textbf{Monitor} & \textbf{benign mean} & \textbf{attack mean} & \textbf{AUROC} & \textbf{AP} & \textbf{FPR @TPR$.95$} \\
\midrule
Codec-blind code structure & 0.897 & 0.600 & 0.090 & 0.510 & 0.960 \\
Decoded-view gate & 0.897 & 1.240 & 1.000 & 1.000 & 0.000 \\
\bottomrule
\end{tabular}
\end{table*}

\begin{table*}[h]
\centering\small
\caption{Detection-side $K$ sweep at the locally benign operating point. Entries are
AUROC\,/\,detection rate.}
\label{tab:ksweep}
\setlength{\tabcolsep}{5pt}
\begin{tabular}{r cccc}
\toprule
\textbf{$K$} & \textbf{per-step} & \textbf{marker-free raw} & \textbf{common-decoder} & \textbf{decoded-view} \\
\midrule
2  & 0.000 / 0.00 & 1.000 / 0.00 & 0.160 / 0.00 & 1.000 / 1.00 \\
3  & 0.000 / 0.00 & 1.000 / 0.00 & 0.288 / 0.00 & 1.000 / 1.00 \\
5  & 0.000 / 0.00 & 1.000 / 0.00 & 0.456 / 0.00 & 1.000 / 1.00 \\
10 & 0.972 / 0.90 & 1.000 / 0.00 & 0.728 / 0.00 & 1.000 / 1.00 \\
\bottomrule
\end{tabular}\\[2pt]
{\footnotesize Note: high raw AUROC for marker-free code structure does not imply
detection at the calibrated threshold; detections remain $0$ until decoded-view access
(or $K{=}10$ per-step recovery).}
\end{table*}

\begin{table*}[h]
\centering\small
\caption{\textbf{Natural-language payload boundary.} Code-structure monitors do not
transfer cleanly to a non-code harmful instruction. Reported one-class signals are
diagnostic for this single code/NL pair and include formatting artifacts; removing
formatting tokens drops the natural-language score. The code one-class (raw) value here
($0.878$) is for this single code/NL pair; the $0.874$ headline elsewhere is the
leave-one-family-out one-class mean across the four encoding families.}
\label{tab:p7}
\setlength{\tabcolsep}{4pt}
\begin{tabular}{l cccc}
\toprule
\textbf{Payload} & \textbf{assembles?} & \textbf{one-class (raw)} & \textbf{one-class (strict)} & \textbf{marker-based} \\
\midrule
code & 0.56 to 0.66 & 0.878 & 0.956 & 1.000 \\
natural lang. & 0.000 & 0.753$^\ast$ & \textbf{0.406} & 1.000 \\
\bottomrule
\end{tabular}\\[2pt]
{\footnotesize $^\ast$artifact: drops to 0.406 once formatting tokens are removed.}
\end{table*}

\paragraph{Adversarial assembly order and decoys.}
A distributed attacker need not present the fragments in order or in isolation. We
permute the $K$ fragments before assembly and interleave benign decoy fragments
($D\in\{1,3,5\}$). Neither breaks the attack: a per-fragment index lets the decoded-view
gate re-sort the pieces and ignore the decoys, so reconstruction and execution stay at
$1.000$. Detection does not degrade either: the marker-free one-class monitor is
invariant under reordering ($0.873$) and rises under decoys ($0.97$ to $0.99$), while the
supervised monitor and the decoded-view gate remain at $1.000$. Adversarial ordering and
decoy injection therefore do not evade assembly recovery within the scoped threat
model.

\end{document}